\def\R{\mathbb{R}}
\def\C{\mathbb{C}}
\DeclareSymbolFontAlphabet{\mathbb}{AMSb}
\DeclareSymbolFontAlphabet{\mathbbl}{bbold}
\DeclareMathOperator{\diag}{diag}
\DeclareMathOperator{\blkdiag}{blkdiag}
\DeclareMathOperator{\trace}{trace}
\DeclareMathOperator{\vect}{vec}
\DeclareMathOperator{\vech}{vech}
\DeclareMathOperator{\vechrs}{vechrs}
\DeclareMathOperator*{\argmin}{arg\,min}
\DeclareMathOperator{\rank}{rank}
\newcommand{\ju}{\mathrm{j}}
\newtheorem{proposition}{Proposition}
\newtheorem{assumption}{Assumption}
\newtheorem{definition}{Definition}
\newtheorem{theorem}{Theorem}
\newtheorem{lemma}{Lemma}
\newtheorem{problem}{Problem}
\newtheorem{example}{Example}
\newcommand{\slow}[1] 
{ \textcolor{red}{(SL: #1)}}  
\newcommand{\ognjen}[1]  
{ \textcolor{ForestGreen}{(OS: #1)}}  
\newcommand{\luc}[1]  
{ \textcolor{blue}{(LW: #1)}}  
\title{\LARGE \bf
Tractable Identification of Electric Distribution Networks
}
\author{Ognjen Stanojev, Lucien Werner, Steven Low and Gabriela Hug
\thanks{This research was supported by the Swiss National Science Foundation under NCCR Automation, grant agreement 51NF40\_180545.}
\thanks{Ognjen Stanojev and Gabriela Hug are with the Power Systems Laboratory, ETH Z\"{u}rich, Z\"{u}rich, Switzerland, emails: \{ognjens, ghug\}@ethz.ch.}%
\thanks{Lucien Werner and Steven Low are with the Department of Computing and Mathematical Sciences at California Institute of Techonology, Pasadena, California, USA, emails: \{lwerner, slow\}@caltech.edu.}
        }%
\begin{document}

\maketitle
\thispagestyle{empty}
\pagestyle{empty}

\begin{abstract}
The identification of distribution network topology and parameters is a critical problem that lays the foundation for improving network efficiency, enhancing reliability, and increasing its capacity to host distributed energy resources. Network identification problems often involve estimating a large number of parameters based on highly correlated measurements, resulting in an ill-conditioned and computationally demanding estimation process. We address these challenges by proposing two admittance matrix estimation methods. In the first method, we use the eigendecomposition of the admittance matrix to generalize the notion of stationarity to electrical signals and demonstrate how the stationarity property can be used to facilitate a maximum a posteriori estimation procedure. We relax the stationarity assumption in the second proposed method by employing Linear Minimum Mean Square Error (LMMSE) estimation. Since LMMSE estimation is often ill-conditioned, we introduce an approximate well-conditioned solution. Our quantitative results demonstrate the improvement in computational efficiency compared to the state-of-the-art methods while preserving the estimation accuracy.
\end{abstract}

\section{Introduction}
Electric distribution networks are a vital component of the energy infrastructure, serving as the final layer in power delivery to residential and commercial users. The increasing integration of renewable energy sources and the implementation of decarbonization policies require modernization of the present control and monitoring practices in power distribution systems. The admittance matrix is at the heart of numerous power system analysis techniques, including (optimal) power flow, state estimation, and short circuit analysis \cite{Admittance2019}. It bears the structure of graph Laplacian matrices \cite{AGTENetworks}, thus unambiguously explaining the network topology and related line parameters. However, distribution utilities often lack accurate topology and parameter information, hindering the construction of the admittance matrix and the use of the available analysis tools \cite{Tutorial2022}.

The recent installation of a significant number of micro Phasor Measurement Units ($\mu$PMUs) \cite{vonMeier2017} and smart meters \cite{Schirmer2023} in distribution grids provides network operators with high-precision and high-sampling-rate measurements. These data streams enhance the observability of distribution grids and enable network identification.
In general, network identification problems involve determining network connectivity (i.e., topology) \cite{Tutorial2022,Bolognani2013VStats,Deka2018StructureNetworks,Cavraro2019Probing}, line parameters \cite{Wehenkel2020}, or both \cite{MoffatUnsupervised2020,IPF2022,Fabbiani2022IdentificationExperiment,Ardakanian2019,Yu2018PaToPa,Gupta2021_CompoundMeasurements,Brouillon2021BayesianNetworks,JS_TSG2022}, using bus voltage, current injection, or branch flow measurements. In this work, we address the problem of estimating the admittance matrix using $\mu$PMU measurements of bus voltage and current injection phasors. Although prior research has examined various network identification problems, we only provide a focused overview of the literature on estimating the admittance matrix in the following paragraph. Please refer to \cite{Tutorial2022} for a more detailed review.

In \cite{MoffatUnsupervised2020}, matrix least squares estimation is applied on phasor measurements to approximate the admittance matrix. A constrained least squares approach is developed in \cite{IPF2022} to enforce the Laplacian matrix structure in the least squares estimate. Instead of batch processing, a recursive least squares method is developed in \cite{Fabbiani2022IdentificationExperiment} to enable frequent online updates. In \cite{Ardakanian2019}, a sparsity promoting $\ell_1$-norm regularizer is introduced to enhance the least squares estimation when the admittance matrix is known to be sparse. An alternative approach to promote sparsity is proposed in \cite{Yu2018PaToPa}, where lines with small conductance values are progressively removed after performing the least squares estimation. The works above assume noise-free measurements of the independent variables, which leads to biased estimates when using realistic data with errors in all measurements (variables).
This limitation of the least squares approaches can be successfully tackled by error-in-variables methods, such as total least squares \cite{Gupta2021_CompoundMeasurements}. A weighted total least squares method is introduced in \cite{Brouillon2021BayesianNetworks} and then extended in \cite{JS_TSG2022} to a Bayesian framework that allows exploiting different forms of prior knowledge of the admittance matrix, thus creating a flexible framework that can achieve high estimation accuracy. 

Despite previous methods laying a solid foundation for admittance matrix estimation, challenges involving ill-conditioning and high computational and memory requirements in the estimation process remain unaddressed. Poor conditioning is common to least squares approaches \cite{MoffatUnsupervised2020,IPF2022,Ardakanian2019} and arises even in the estimation of small-size networks due to high correlations in voltage or current measurements. 
Significant computational burden and memory requirements arise when solving the weighted total least squares \cite{Brouillon2021BayesianNetworks, JS_TSG2022} since a substantial number of measurements are required for accuracy, and a large number of parameters contained within the admittance matrix need to be estimated.

In this paper, we address the aforementioned challenges by proposing two admittance matrix estimation methods that are computationally efficient and numerically stable. The first method is motivated by the recent developments in the graph signal processing community \cite{DSPonGraphs2013} on the identification of graph filters \cite{StationarySPG2017,StationaryGP2017}. Expanding on these works, we use the eigendecomposition of the admittance matrix to generalize the notion of stationarity to electrical signals in power networks with a constant reactance-resistance ratio. Subsequently, we demonstrate how a Maximum a Posteriori (MAP) estimation method (resembling \cite{Brouillon2021BayesianNetworks,JS_TSG2022}) can be simplified when the current injections are stationary.

In the second proposed method, we relax the adopted assumptions and consider a Linear Minimum Mean Square Error (LMMSE) estimation method which is applicable to general power networks and generic current statistics. The solution to LMMSE is known as the Wiener filter, which may suffer from poor conditioning and might not respect the Laplacian structure of the admittance matrix. To address the ill-conditioning issue, we introduce an approximate solution based on eigenvalue truncation. Furthermore, we demonstrate that the Laplacian structure can be enforced via a postfiltering procedure without significant additional computational effort. Previous works which examined the voltage and current injection statistics in a similar way focused on topology identification rather than admittance matrix estimation \cite{Bolognani2013VStats,Deka2018StructureNetworks}. 

\textbf{Notation}. We denote the sets of real and complex numbers by $\R$ and $\C$. Given a matrix $A$, $A^\top$ denotes its transpose, $A^*$ denotes the entrywise conjugate, and $A^\mathsf{H}=(A^{*})^\top$ denotes the conjugate transpose. For column vectors $x\in\C^n$ and $y\in\C^m$, we use $(x,y)\coloneqq [x^\top,y^\top]^\top\in\C^{n+m}$ to denote a stacked vector. For a random vector $X$, we use $\mathbb{E}[X]$ to denote its mean and $\Sigma_X$ to denote its covariance matrix. Finally, $\mathcal{I}_n$ denotes the $n\times n$ identity matrix, $\mathbbl{1}_n$ and $\mathbbl{0}_n$ are $n$-dimensional vectors of all ones and zeros, respectively.

\section{Admittance Matrix Model of Power Grids} \label{sec:2}
Consider a static (steady-state), single-phase equivalent distribution network composed of $n$ nodes $\mathcal{N}=\{1,\dots,n\}$ and $m$ undirected branches $\mathcal{E}\subseteq\mathcal{N}\times\mathcal{N}$. The network is modeled as a connected and undirected graph $\mathcal{G}\coloneqq(\mathcal{N},\mathcal{E},\mathcal{W})$, with complex-valued edge weights $\mathcal{W}\coloneqq\{y_{ij}\in\C : y_{ij}= g_{ij}+\ju b_{ij},g_{ij}>0, b_{ij}\leq0,\forall\{i,j\}\in\mathcal{E}\}$ representing the series admittances in the standard lumped $\pi$-model of a transmission line. For the purpose of defining the incidence matrix of $\mathcal{G}$, let us assign to each edge a unique identifier $e=\{1,\dots,m\}$ and an arbitrary orientation. Admittances connected to the ground are defined by $y_{i0}\coloneqq g_{i0}+\ju b_{i0}$, with $g_{i0}\geq0, b_{i0}\geq0,\forall i\in\mathcal{N}$, and referred to as shunt admittances. 
Each node $i\in\mathcal{N}$ in the network is associated with a nodal current injection $\bar{I}_i\in\C$ and a nodal voltage $\bar{V}_i\in\C$.  Kirchhoff's and Ohm's laws lead to the following model of the considered electric network~\cite{AGTENetworks}:
\begin{equation} \label{eq:admittance_model}
    \bar{I} = {B}\diag(\{y_e\}_{e=1}^m){B}^\top \bar{V} + \diag(\{y_{i0}\}_{i=1}^n)\bar{V} = {Y} \bar{V},
\end{equation}
where $\bar{V}\coloneqq(\bar{V}_1,\dots,\bar{V}_{n})$ and $\bar{I}\coloneqq(\bar{I}_1,\dots,\bar{I}_{n})$ collect the bus voltages and the current injections, respectively, and $B\in\{-1,0,1\}^{n\times m}$ is the node-edge incidence matrix of $\mathcal{G}$.
\begin{definition}\label{def:1}
    The admittance matrix $Y$ is a complex symmetric matrix, with diagonal elements given by $Y_{ii} = \sum_{j=1,j\neq i}^n y_{ij} + y_{i0},\forall i\in\mathcal{N}$, and off-diagonal elements defined by $Y_{ij}=-y_{ij},\forall\{i,j\}\in\mathcal{E}$ and $Y_{ij}=0$ otherwise.
\end{definition}
Under the adopted assumptions on the real and imaginary parts of the series and shunt admittances, the necessary and sufficient condition for the invertibility of $Y$ is the existence of at least one shunt admittance.
The assumptions made are reasonable for distribution networks, and we refer the reader to \cite{SLowBook,Turizo2022} for a broader discussion on the invertibility of $Y$.

Therefore, singular admittance matrices under 
the adopted assumptions have zero row sums, i.e., $Y\mathbbl{1}_n=\mathbbl{0}_n$ iff $Y$ is singular.
The linear map defined by the singular admittance matrix $Y:\C^n\rightarrow \C^n$ has the nullspace of dimension one consisting of vectors in $\mathrm{span}(\mathbbl{1}_n)=\{\alpha \mathbbl{1}_n, \alpha\in\C\}$. 
Hence, the Moore-Penrose pseudoinverse of $Y$, denoted by $Y^\dagger$, can be used to form a subspace of solutions to \eqref{eq:admittance_model}, given by
\begin{equation} \label{eq:impedance_model}
    \bar{V} = Y^\dagger \bar{I} + \alpha\mathbbl{1}_n,
\end{equation}
where $\alpha\in\C$. The preceding relationship holds if and only if the current injections are balanced, that is, $\mathbbl{1}_n^\top \bar{I} = 0$. The same relationship can be used for invertible $Y$ in which case $\bar{I}\in\C^n$ is unrestricted and $\alpha=0$.
Equation \eqref{eq:impedance_model} represents the so-called \textit{impedance matrix} model of the network.

It is evident from Definition~\ref{def:1} that $Y$ is not necessarily a \textit{normal matrix} since it is non-Hermitian complex symmetric. Hence, it is not always unitarily diagonalizable. Nevertheless, if the conductance-susceptance ratio\footnote{The conductance-susceptance ratio is more commonly referred to as the reactance-resistance ratio or the ``$x/r$ ratio" in the power systems literature. These terms are used interchangeably in this work.} is identical for all lines across the network, then $Y$ is \textit{normal} \cite{SLowBook}. This assumption is typically valid for lines at the same voltage level.

\begin{theorem}[Pseudoinverse of $Y$] \label{th:1}
Suppose the conductance-susceptance ratio $\frac{g_{ij}}{b_{ij}}$
is the same for all $(i,j)\in\mathcal{E}$.  Then
\begin{enumerate}
\item $Y$ is a normal matrix and has a spectral decomposition 
$Y = \mathbbl{W}\mathbbl{\Lambda}\mathbbl{W}^\mathsf{H}$, where $\mathbbl{\Lambda}$ is the diagonal matrix with the eigenvalues of $Y$ 
on its diagonal, 
and the columns of $\mathbbl{W}$ are the corresponding eigenvectors.

\item The Moore-Penrose pseudoinverse of $Y$ is $Y^\dagger = \mathbbl{W}\mathbbl{\Lambda}^\dagger\mathbbl{W}^\mathsf{H}$, where $\mathbbl{\Lambda}^\dagger$ is the diagonal matrix obtained from $\mathbbl{\Lambda}$ by replacing nonzero eigenvalues of $Y$ by their 
reciprocals.
\end{enumerate}
\end{theorem}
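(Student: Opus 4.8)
The plan is to first exploit the constant conductance-susceptance ratio to extract a common complex phase from the admittances, reducing $Y$ to a complex scalar multiple of a real symmetric matrix, and then invoke the spectral theorem. Since the ratio $g_{ij}/b_{ij}$ is fixed, the argument $\theta \coloneqq \arg(y_{ij})$ is identical across all edges, so that $y_e = \abs{y_e}\,\euler^{\ju\theta}$ for a single $\theta$ independent of $e$. Consequently $\diag(\{y_e\}_{e=1}^m) = \euler^{\ju\theta}\diag(\{\abs{y_e}\}_{e=1}^m)$, and the branch contribution becomes $B\diag(\{y_e\})B^\top = \euler^{\ju\theta}L$, where $L \coloneqq B\diag(\{\abs{y_e}\})B^\top$ is a real symmetric weighted-Laplacian matrix. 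Provided the shunt admittances share this same argument, the shunt term factors identically and yields $Y = \euler^{\ju\theta}M$ with $M$ real symmetric; I would state this common-phase requirement explicitly, since it is the crux that makes the argument go through.

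For the first claim I would apply the real spectral theorem to $M$: there exist a real orthogonal $U$ (so $U^\top U = \mathcal{I}_n$) and a real diagonal $\Lambda_M$ with $M = U\Lambda_M U^\top$. Setting $\mathbbl{W} \coloneqq U$ and $\mathbbl{\Lambda} \coloneqq \euler^{\ju\theta}\Lambda_M$, and using that a real orthogonal matrix is unitary ($U^\mathsf{H} = U^\top$), gives $Y = \euler^{\ju\theta}U\Lambda_M U^\top = \mathbbl{W}\mathbbl{\Lambda}\mathbbl{W}^\mathsf{H}$, a unitary diagonalization. Normality then follows either directly from $YY^\mathsf{H} = \euler^{\ju\theta}M\,\euler^{-\ju\theta}M = M^2 = \euler^{-\ju\theta}M\,\euler^{\ju\theta}M = Y^\mathsf{H}Y$, or as an immediate consequence of $Y$ being unitarily diagonalizable.

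For the second claim I would verify that $\mathbbl{W}\mathbbl{\Lambda}^\dagger\mathbbl{W}^\mathsf{H}$ satisfies the four defining Moore-Penrose conditions, which reduces cleanly to checking them on the diagonal factors because $\mathbbl{W}$ is unitary. The scalar fact $\lambda^\dagger = 1/\lambda$ for $\lambda \neq 0$ (and $0$ otherwise) gives $\mathbbl{\Lambda}\mathbbl{\Lambda}^\dagger\mathbbl{\Lambda} = \mathbbl{\Lambda}$ and $\mathbbl{\Lambda}^\dagger\mathbbl{\Lambda}\mathbbl{\Lambda}^\dagger = \mathbbl{\Lambda}^\dagger$, while $\mathbbl{\Lambda}\mathbbl{\Lambda}^\dagger = \mathbbl{\Lambda}^\dagger\mathbbl{\Lambda}$ is a real $0/1$ projector and hence Hermitian, settling the two symmetry conditions. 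Equivalently, one can cite the identity $(UAV)^\dagger = V^\mathsf{H}A^\dagger U^\mathsf{H}$ for unitary $U,V$ with $U = \mathbbl{W}$ and $V = \mathbbl{W}^\mathsf{H}$.

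The main obstacle I anticipate is not the linear algebra, which is routine once the factorization is in place, but rather the treatment of the shunt admittances: the hypothesis as stated constrains only the series ratios, whereas normality genuinely requires the diagonal shunt matrix $D$ to share the phase $\euler^{\ju\theta}$. Indeed, writing $Y = \euler^{\ju\theta}L + D$ one finds that $YY^\mathsf{H} = Y^\mathsf{H}Y$ fails in general unless $L$ and $D$ commute, so the clean scalar factorization $Y = \euler^{\ju\theta}M$ is essential. I would therefore either assume all admittances are co-phasal or restrict to the case where the shunt term is itself $\euler^{\ju\theta}$-phased, and flag this as the hypothesis actually used.
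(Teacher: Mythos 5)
Your argument is correct, and it supplies a proof that the paper itself does not give: Theorem~\ref{th:1} is stated without proof, with the normality claim deferred to the reference \cite{SLowBook}. The route you take --- observing that a constant ratio $g_{ij}/b_{ij}$ makes all series admittances co-phasal, so that $B\diag(\{y_e\}_{e=1}^m)B^\top=\euler^{\ju\theta}L$ with $L$ a real symmetric weighted Laplacian, applying the real spectral theorem to get a real orthogonal (hence unitary) $\mathbbl{W}$, and then checking the four Moore--Penrose identities on the diagonal factor --- is the standard one and is presumably what the cited reference does; both of your suggested verifications of part (2) are sound. Your flag about the shunt terms is the substantive contribution and is well taken: the hypothesis constrains only the series ratios, and under the paper's own sign conventions ($b_{ij}\le 0$ for lines but $b_{i0}\ge 0$ for shunts) the diagonal shunt matrix $D$ generally does \emph{not} share the phase $\theta$, so $Y=\euler^{\ju\theta}L+D$ need not be normal as the theorem is literally stated. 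Two remarks refine this. First, in the regime where the pseudoinverse is actually needed --- singular $Y$, which by the paper's discussion corresponds to the absence of shunt admittances --- the clean factorization $Y=\euler^{\ju\theta}L$ holds and no additional hypothesis is required. Second, your co-phasal assumption can be weakened: writing $\euler^{-\ju\theta}y_{i0}=a_i+\ju c_i$ with $a_i,c_i$ real, a direct computation gives $YY^\mathsf{H}-Y^\mathsf{H}Y=LE-EL$ with $E=-2\ju\diag(\{\mathrm{Im}(\euler^{-\ju\theta}y_{i0})\}_{i=1}^n)$, so normality holds precisely when $L$ commutes with $E$; for a connected network with positive edge weights this forces $c_i\equiv c$, in which case $Y=\euler^{\ju\theta}\big(L+\diag(\{a_i\}_{i=1}^n)+\ju c\,\mathcal{I}_n\big)$ and the same real orthogonal $\mathbbl{W}$ diagonalizes $Y$. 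So the condition you should attach to the theorem is that $\mathrm{Im}(\euler^{-\ju\theta}y_{i0})$ is constant across nodes (with co-phasal or absent shunts as the natural special cases); with that caveat your proof is complete.
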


\section{Network Identification Problem}
\subsection{Problem Formulation}
We take a statistical perspective and consider zero-mean random vectors $I\coloneqq \bar{I}-\mathbb{E}[\bar{I}]$ and $V\coloneqq \bar{V}-\mathbb{E}[\bar{V}]$, defined such that their means satisfy $\mathbb{E}[I]=Y\mathbb{E}[V]$. Let us further define the voltage covariance matrix $\Sigma_V=\mathbb{E}[{V}{V}^\mathsf{H}]$, the current injection covariance matrix $\Sigma_I=\mathbb{E}[{I}{I}^\mathsf{H}]$, and their cross-covariance  $\Sigma_{{I}{V}} = \mathbb{E}[{I}{V}^\mathsf{H}]$.
The considered network identification problem is formally stated in the following.

\begin{problem}\label{prob:1}
Given a set $\mathcal{S}\coloneqq\{(\tilde{{V}}^1,\tilde{{I}}^1),\dots,(\tilde{{V}}^N,\tilde{{I}}^N)\}$ of $N$ pairs of noisy, zero-centered bus voltage and current injection measurements pertaining to different steady-state operating points, the objective is to infer the underlying distribution network -- its edges and the associated admittance values -- or equivalently, the admittance matrix $Y$ (Def.~\ref{def:1}). 
\end{problem}

The $N$ measurements of voltage and current phasors
are further collected in matrices
    $\tilde{\mathbbl{V}} = \begin{bmatrix} \tilde{V}^1 & \tilde{V}^2 & \dots & \tilde{V}^N \end{bmatrix}$ and  $\tilde{\mathbbl{I}} = \begin{bmatrix} \tilde{{I}}^1 & \tilde{{I}}^2 & \dots & \tilde{{I}}^N \end{bmatrix}$
that will be used to characterize solutions to Problem~\ref{prob:1}. We are putting forth the following two assumptions to limit the scope of our analysis.
\begin{assumption}
    The measurement matrices are assumed to be full rank, i.e., $\rank(\tilde{\mathbbl{V}})=n$ and $\rank(\tilde{\mathbbl{I}})=n$.
\end{assumption}
\begin{assumption}
The nodes $j$ where the current is neither injected nor extracted ($I_j=0$) have been removed \textit{a priori} by applying the Kron \cite{Kron2013} or subKron \cite{MoffatUnsupervised2020} reduction. 
\end{assumption}
The first assumption implies that $N\geq n$. Additionally, experiment design \cite{DU202013311} might be required to guarantee that the measurement matrices are full rank. The second assumption implies that all the nodes $i$ at which $I_i\neq0$ are collected in $\mathcal{N}$ and are assumed to be observed. Note that Kron reduction of the network may lead to a non-sparse admittance matrix, and the Kron-reduced graph $\mathcal{G}$ may not be a tree graph, which are common assumptions in distribution network studies \cite{Cavraro2019Probing}.

\subsection{Measurement Model}
The available $\mu$PMU or smart meter measurements are corrupted by measurement noise. We adopt a generic linear statistical model to represent the individual bus voltage $\tilde{V}\in\C^n$ and current injection $\tilde{I}\in\C^n$ observations as follows:
\begin{equation}\label{eq:factor_analysis_model}
    \tilde{V} = V + \varepsilon_v, \qquad \tilde{I} = I + \varepsilon_i, 
\end{equation}
where $\varepsilon_v$ and $\varepsilon_i$ are complex random vectors describing the measurement noise. 
We assume that the noise vectors follow uncorrelated complex multivariate Gaussian distributions with zero mean: $\varepsilon_v \sim \mathcal{NC}(\mathbbl{0}_n,\sigma_v^2\mathcal{I}_n), \varepsilon_i \sim \mathcal{NC}(\mathbbl{0}_n,\sigma_i^2\mathcal{I}_n)$. The covariances $\sigma_v$ and $\sigma_i$ might be time-varying, but there is no temporal or spatial correlation in the measurement noise. Note that other noise models may be applicable \cite{Brouillon2021BayesianNetworks,Varghese2022TransmissionNoise}. 

\section{Network Identification under Stationary Current Injections}
This section extends the standard notions of wide-sense stationarity in discrete time to define stationarity with respect to $Y$ for networks that satisfy the following assumption. 
\begin{assumption}\label{ass:3}
The conductance-susceptance ratio $\frac{g_{ij}}{b_{ij}}$ is the same
for all lines $(i,j)\in \cal E$ across the network.
\end{assumption}
It is first demonstrated that the eigenvectors of $Y$ can be identified from the voltage covariance matrix when the current injections satisfy the stationarity property. Subsequently, we employ a MAP procedure to estimate the corresponding eigenvalues, hence identifying $Y$ according to Theorem~\ref{th:1}.

\subsection{Recovering the Eigenvectors of $Y$}
\begin{definition}\label{def:stationarity}
Given an admittance matrix $Y$ with a spectral decomposition $Y=\mathbbl{W}\mathbbl{\Lambda}\mathbbl{W}^\mathsf{H}$, a zero-mean random variable $X$ is said to be Wide-Sense Stationary (WSS) with respect to $Y$ if its covariance matrix $\Sigma_X$ also has a spectral decomposition with
the unitary $\mathbbl{W}$, i.e., $\Sigma_X=\mathbbl{W}\mathbbl{\Lambda}_X\mathbbl{W}^\mathsf{H}$ where $\mathbbl{\Lambda}_X$ is a diagonal matrix of eigenvalues of $\Sigma_X$.
\end{definition}
A practically relevant example of a variable that is WSS with respect to $Y$ is white noise $W$, characterized by $\mathbb{E}[W]=0$ and $\mathbb{E}[WW^\mathsf{H}]=\sigma^2\mathcal{I}_n$. Current injections in distribution grids are predominantly determined by loads that can reliably be modeled as white noise over short time intervals (on the order of seconds). Hence, the assumption of white noise current injection statistics has been common in the distribution network identification literature \cite{Bariya2021GuaranteedGrids}. 
The following proposition establishes a connection between the voltage statistics and the admittance matrix when the network is subjected to stationary current injections.
\begin{proposition}
Let $I$ be WSS with respect to $Y$  
and $I=YV$. Then $V$ is also WSS with respect to $Y$.
\end{proposition}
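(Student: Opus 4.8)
The plan is to propagate the spectral structure of $\Sigma_I$ through the linear map $Y$ using Theorem~\ref{th:1}. Since Assumption~\ref{ass:3} holds, $Y$ is normal with spectral decomposition $Y=\mathbbl{W}\mathbbl{\Lambda}\mathbbl{W}^\mathsf{H}$ and unitary $\mathbbl{W}$. By hypothesis $I$ is WSS with respect to $Y$, so by Definition~\ref{def:stationarity} its covariance factors as $\Sigma_I=\mathbbl{W}\mathbbl{\Lambda}_I\mathbbl{W}^\mathsf{H}$ with $\mathbbl{\Lambda}_I$ diagonal. The goal is then to exhibit a diagonal $\mathbbl{\Lambda}_V$ such that $\Sigma_V=\mathbbl{W}\mathbbl{\Lambda}_V\mathbbl{W}^\mathsf{H}$, which by the same definition certifies that $V$ is WSS with respect to $Y$.

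First I would treat the case where $Y$ is invertible, the situation of primary interest since the existence of a single shunt admittance already guarantees invertibility. Inverting $I=YV$ gives $V=Y^{-1}I$, and hence
\begin{equation*}
\Sigma_V=\mathbb{E}[VV^\mathsf{H}]=Y^{-1}\,\mathbb{E}[II^\mathsf{H}]\,(Y^{-1})^\mathsf{H}=Y^{-1}\Sigma_I (Y^\mathsf{H})^{-1}.
\end{equation*}
Substituting $Y^{-1}=\mathbbl{W}\mathbbl{\Lambda}^{-1}\mathbbl{W}^\mathsf{H}$ together with the factorization of $\Sigma_I$, and collapsing the interior factors $\mathbbl{W}^\mathsf{H}\mathbbl{W}=\mathcal{I}_n$ by unitarity, reduces the product to $\Sigma_V=\mathbbl{W}\bigl(\mathbbl{\Lambda}^{-1}\mathbbl{\Lambda}_I(\mathbbl{\Lambda}^\mathsf{H})^{-1}\bigr)\mathbbl{W}^\mathsf{H}$. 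The bracketed factor is a product of diagonal matrices and is therefore itself diagonal, so setting $\mathbbl{\Lambda}_V\coloneqq\mathbbl{\Lambda}^{-1}\mathbbl{\Lambda}_I(\mathbbl{\Lambda}^\mathsf{H})^{-1}$ (with entries $(\mathbbl{\Lambda}_I)_{ii}/\abs{\lambda_i}^2$) completes this case.

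For the singular case I would replace $Y^{-1}$ by the pseudoinverse $Y^\dagger=\mathbbl{W}\mathbbl{\Lambda}^\dagger\mathbbl{W}^\mathsf{H}$ of Theorem~\ref{th:1} and invoke the impedance-model representation~\eqref{eq:impedance_model}. Taking the reference-free (minimum-norm) solution $V=Y^\dagger I$, i.e. fixing $\mathbbl{1}_n^\top V=0$ so that the $\alpha\mathbbl{1}_n$ component lying in the nullspace $\mathrm{span}(\mathbbl{1}_n)$ is discarded, the identical computation yields $\Sigma_V=\mathbbl{W}\bigl(\mathbbl{\Lambda}^\dagger\mathbbl{\Lambda}_I(\mathbbl{\Lambda}^\dagger)^\mathsf{H}\bigr)\mathbbl{W}^\mathsf{H}$, again with a diagonal interior. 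Here the zero eigenvalue of $Y$ simply forces a vanishing variance along $\mathbbl{1}_n$, which is consistent with the chosen reference and with $V\in\range(Y^\dagger)=\range(Y^\mathsf{H})$.

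The hard part will be precisely this singular case. Because $\mathbbl{1}_n$ spans the nullspace of $Y$, the voltage is determined only up to an additive reference term $\alpha\mathbbl{1}_n$; if $\alpha$ is permitted to be random and correlated with $I$, it introduces cross terms in $\Sigma_V$ that couple the $\mathbbl{1}_n$-eigendirection to the remaining eigenvectors, destroying the diagonal structure in the $\mathbbl{W}$-basis. I expect the clean resolution is to fix the reference (equivalently, assume balanced, reference-free voltages $\mathbbl{1}_n^\top V=0$) so that $V$ is orthogonal to $\mathrm{null}(Y)$ and the cross terms vanish. Under invertibility this obstruction is absent altogether, and the one-line computation of the second paragraph suffices.
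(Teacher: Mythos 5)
Your proposal is correct and follows essentially the same route as the paper: the paper likewise writes $V=Y^\dagger I$, computes $\Sigma_V = Y^\dagger\Sigma_I (Y^\dagger)^\mathsf{H}$, and collapses the product to $\mathbbl{W}\bigl(|\mathbbl{\Lambda}^\dagger|^2\mathbbl{\Lambda}_I\bigr)\mathbbl{W}^\mathsf{H}$ using unitarity of $\mathbbl{W}$, exactly as in your computation (your invertible case is just the special instance $Y^\dagger = Y^{-1}$). Your added remark that the reference term $\alpha\mathbbl{1}_n$ must be fixed (or uncorrelated with $I$) in the singular case is a legitimate point of care that the paper's proof passes over silently by taking $V=Y^\dagger I$ outright, but it does not change the argument.
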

\begin{proof}
Since $V$ is already defined to be zero-mean, we only need to show that 
the covariance matrix $\Sigma_V$ of $V$ is 
unitarily diagonalized by $\mathbbl{W}$, which is derived by considering that
\begin{align*}
    \Sigma_V &= \mathbb{E}[VV^\mathsf{H}]=\mathbb{E}[Y^\dagger I(Y^\dagger I)^\mathsf{H}]=Y^\dagger \Sigma_I (Y^\dagger)^\mathsf{H}\\ &=\mathbbl{W}\mathbbl{\Lambda}^\dagger\mathbbl{W}^\mathsf{H}\mathbbl{W}\mathbbl{\Lambda}_I\mathbbl{W}^\mathsf{H}(\mathbbl{W}\mathbbl{\Lambda}^\dagger\mathbbl{W}^\mathsf{H})^\mathsf{H}\\&=\mathbbl{W}(|\mathbbl{\Lambda}^\dagger|^2\mathbbl{\Lambda}_I)\mathbbl{W}^\mathsf{H},
\end{align*}
where $\Sigma_I = \mathbbl{W}\mathbbl{\Lambda}_I\mathbbl{W}^\mathsf{H}$
is the covariance of $I$.
Hence, $\Sigma_V$ is unitarily diagonalized by $\mathbbl{W}$, which concludes the proof.
\end{proof}

Remarkably, the eigenvectors of the voltage covariance matrix $\Sigma_V$ are the eigenvectors of the admittance matrix $Y$ given WSS current injections.
In general, we cannot verify if $I$ is WSS since the admittance matrix is unknown. However, in a practically relevant case when $I$ is white noise, stationarity holds trivially, and the eigendecomposition of $\Sigma_V$ can be performed to identify the eigenvectors of $Y$. On the other hand, the eigenvalues cannot be recovered similarly since only their magnitude can be computed from the above decomposition, i.e., from $\mathbbl{\Lambda}_V=|\mathbbl{\Lambda}^\dagger|^2\mathbbl{\Lambda}_I$, but not the phase.

\subsection{Maximum a Posteriori Estimation}
Upon recovering the eigenvectors of $Y$, maximum a posteriori estimation can be leveraged to determine the eigenvalues in $\diag{(\mathbbl{\Lambda})}$. The MAP estimate gives the most likely choice of the latent variables $(V,I,Y)$ given the observations $(\tilde{{V}},\tilde{{I}})$. To this end, the posterior distribution can be formulated using Bayes' rule and the conditional independence axioms:
\begin{align} \label{eq:posterior_distribution}
    p(V,I,Y|\tilde{{V}},\tilde{{I}}) &\propto  p(\tilde{{V}}|{V},{Y})p(\tilde{{I}}|{I},{Y})\frac{p({V},{I})}{p(\tilde{{V}},\tilde{{I}})}p({Y}) \nonumber\\
    &\,\mathrm{s.t.} \quad {I} = {YV},
\end{align}
where the admittance matrix is assumed to be independent of the electric variables and their measurements. According to the measurement model in \eqref{eq:factor_analysis_model}, the distributions $p(\tilde{{V}}|{V},{Y})$ and $p(\tilde{{I}}|{I},{Y})$ are Gaussian and can be expressed using the change of variables formula. For simplicity, priors on voltages and currents are considered noninformative, thus represented as uniform distributions over their respective domains. Under this assumption, the quotient of priors ${p({V},{I})}/{p(\tilde{V},\tilde{I})}$ can be neglected. Finally, a prior commonly imposed on $Y$ assumes a unit variance Gaussian distribution on all entries of $Y$. Such prior can be represented by a matrix Gaussian distribution $p(Y) = \exp{({-\trace{(YY^\mathsf{H}})})}$ and leads to ridge regularization. An elaborate discussion on other practically relevant prior distributions $p(Y)$ is given in \cite{JS_TSG2022}. The negative $\log$ minimization of the posterior distribution given in \eqref{eq:posterior_distribution} is constructed, resulting in
\begin{align} \label{eq:original_opt}
    \min_{I,V,Y}&\,\,\,\|\tilde{{V}}-{V} \|_2^2+\|\tilde{{I}}-{I}\|_2^2+\beta \|Y\|_{\mathrm{F}}^2 \\
    \mathrm{s.t.}&\quad  {I}={YV},\nonumber
\end{align}
where $\beta>0$ is a constant regularization parameter proportional to the measurement noise variance. 
The problem at hand is nonconvex 
and is characterized by a large number of decision variables. Furthermore, previous works \cite{JS_TSG2022} apply vectorization of the admittance matrix as a part of the solution approach, which further increases the scale of the problem. We next demonstrate how the formulation can be simplified by leveraging the obtained spectral template $\mathbbl{W}$. 

Changing the coordinates to the orthonormal basis consisting of the columns
of $\mathbbl W$, the bus voltage and current injection vectors are defined as ${\nu}\coloneqq\mathbbl{W}^\mathsf{H}{V}$ and ${\varphi}\coloneqq\mathbbl{W}^\mathsf{H}{I}$.
Then, (i) the network model \eqref{eq:admittance_model} reduces to $\varphi = \mathbbl{\Lambda}\nu$ and the model \eqref{eq:impedance_model} to $\nu = \mathbbl{\Lambda}^\dagger\varphi$;
(ii) given that the Frobenius norm is unitarily invariant, the regularization term can be reformulated as $\|Y\|_\mathrm{F}=\|\mathbbl{W}^\mathsf{H}Y\mathbbl{W}\|_\mathrm{F}=\|{\lambda}\|_2$, where $\lambda=\diag(\mathbbl{\Lambda})$;
(iii) the complex power loss\footnote{An interesting physical interpretation can be given to the voltage and current representations based on this property. By taking the complex magnitude of the loss, we obtain $|S_\mathrm{loss}|= \sum_{i=1}^n|\lambda_i^*||\nu_i|^2$. The components in $\nu$ weighted by higher magnitude eigenvalues contribute more to the power loss magnitude. Thus, the magnitudes of entries in $\nu$ inform the complex power loss-efficiency of a steady-state operating point.} in the network is given by 
\begin{align*}
S_\mathrm{loss} & \ = \ 
\trace\left( VI^\mathsf{H} \right) 
\ = \ \trace\left( \mathbbl{W} \left(\nu \nu^{\sf H}\right) 
\mathbbl{\Lambda}^{\sf H} \mathbbl{W}^{\sf H}  \right)
\\ &
\ = \ \trace \left( \left( \nu \nu^{\sf H}\right) \mathbbl{\Lambda}^{\sf H} \right)
\ = \ \trace \left( \mathbbl{\Lambda}^{\dag} \left( \varphi \varphi^{\sf H} \right) \right),
\end{align*}
that is, $S_\mathrm{loss} = \ \sum_{i=1}^n \lambda_i^* |\nu_i|^2 = \ \sum_{i=1}^n \lambda_i^\dagger |\varphi_i|^2$.

The optimization in \eqref{eq:original_opt} is
equivalent to the following problem in voltages
$\nu$, currents $\varphi$, and eigenvalues $\lambda$ of $Y$:
\begin{subequations} \label{eq:opt_reformulated}
\begin{align}
    \min_{{\varphi},{\nu},{\lambda}}&\,\,\,\|\tilde{{V}}-{\mathbbl{W}{\nu}} \|_2^2+\|\tilde{{I}}-{\mathbbl{W}{\varphi}}\|_2^2+\beta\|\lambda\|_2^2 \\
    \mathrm{s.t.}&\quad  \varphi = \mathbbl{\Lambda}\nu,
\end{align}
\end{subequations}
with a convex objective function and bilinear constraints 
$\mathbbl{\Lambda}\nu$ enforcing the network model. 
The classical algorithm for solving this problem is the alternate block coordinate descent \cite{BlockCD2011}, which alternates between setting $({\varphi},{\nu})$ constant and solving for ${\lambda}$, and vice versa, until convergence. Therefore, the two optimization problems that need to be iteratively solved are given by
\begin{align}
    \hat{\lambda} &= \argmin\, \|\varphi-\diag{(\nu)}\lambda\|_2^2 + \beta\|\lambda\|_2^2, \label{eq:nnls} \\
    \hat{\nu} &= \argmin \|\tilde{{V}}-{\mathbbl{W}{\nu}} \|_2^2+\|\tilde{{I}}-{\mathbbl{W}\mathbbl{\Lambda}{\nu}}\|_2^2, \label{eq:quad_prog}
\end{align}
where \eqref{eq:quad_prog} is a quadratic program and \eqref{eq:nnls} is a regularized least-squares problem, both solvable in closed-form as
\begin{align}
    \hat{\lambda} &= \big(\beta\mathcal{I}_n+\diag{(|\nu|)}\big)^{-1}\diag{(\varphi\nu^\mathsf{H})}, \label{eq:nnls_sln}\\
    \hat{\nu} &= \big(\mathcal{I}_n+\mathbbl{\Lambda}^2\big)^{-1}(\mathbbl{W}^\mathsf{H}\Tilde{V}+\mathbbl{\Lambda}\mathbbl{W}^\mathsf{H}\Tilde{I}). \label{eq:qp_sln}
\end{align}
The matrices subject to inversion are guaranteed to be invertible since both $\lambda^2$ and $|\nu|$ are nonnegative and $\beta>0$.

\section{Wiener Filter-based Network Identification}
In this section, we consider a more general approach when
the assumptions of current injection stationarity (Def.~\ref{def:stationarity}) and constant $x/r$ ratio (Assumption~\ref{ass:3}) used in the previous section may not hold. To this end, let us  define ${Z}\coloneqq({{I}},{{V}})$, with the corresponding joint covariance matrix given by
\begin{equation}
    {\Sigma_{Z}} = \begin{bmatrix} {\Sigma_{{I}}} & {\Sigma_{{I}{V}}} \\
    {\Sigma_{{I}{V}}}^\mathsf{H} & {\Sigma_{{V}}}\end{bmatrix}.
\end{equation} In linear minimum mean square error estimation, the aim is to estimate ${I}$ from ${V}$ using a filter ${Y}$ such that the estimate ${Y}{V}$ minimizes the mean square error $\mathbb{E}[\|{V}{Y}-{I}\|_2^2]$. Assuming that ${\Sigma}_{{{V}}}$ is full rank, the Wiener-Hopf equation admits a simple closed-form solution, namely 
\begin{equation} \label{eq:WF}
    {Y_\mathrm{W}}\coloneqq {\Sigma_{IV}}{\Sigma_{{V}}}^{-1},
\end{equation}
called the Wiener filter. The corresponding minimum mean square error matrix is the Shur complement of $\Sigma_V$ in the joint covariance matrix, that is $\Sigma_I - \Sigma_{IV}\Sigma_V^{-1}\Sigma_{IV}^{\mathsf{H}}$.
The quality of the Wiener filter estimate might be degraded by the effect of additive noise in the current injection and bus voltage measurements~\eqref{eq:factor_analysis_model}. Furthermore, large condition numbers of ${\Sigma_{{V}}}$ are commonly encountered and might hinder the numerical computation of ${\Sigma_{{V}}}^{-1}$. These issues are addressed in the next subsection.

\subsection{A Well-Conditioned Wiener Filter Approximation}
A square matrix is ill-conditioned if it is invertible but becomes singular for a small perturbation of some of its entries. More formally, given a \textit{normal} matrix $A\in\C^{n\times n}$ the condition number $\kappa(A)=|\lambda_\mathrm{max}(A)|/ |\lambda_\mathrm{min}(A)|$ is the ratio of its largest eigenvalue $\lambda_\mathrm{max}(A)$ to its smallest eigenvalue $\lambda_\mathrm{min}(A)$ by moduli. If $\kappa(A)$ is high, $A$ is said to be ill-conditioned.
We begin our analysis by illustrating the physical nature of the conditioning issue in network identification. 

\begin{figure}[!b]
    \centering
    \includegraphics[scale=0.875]{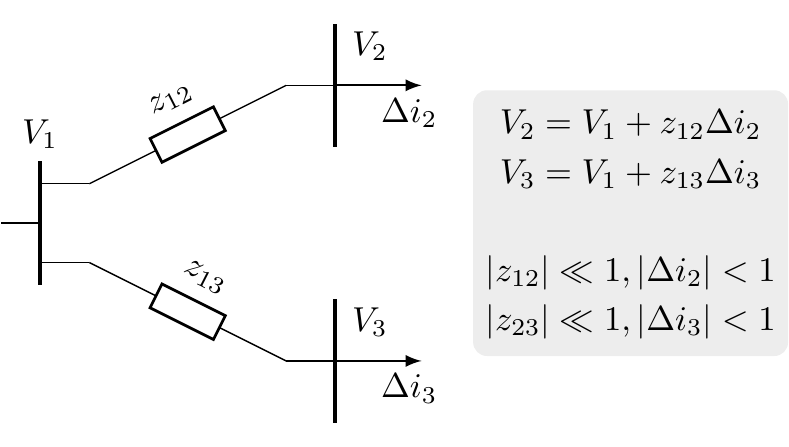}
    \caption{A 3-bus example demonstrating the physical origin of poor conditioning of the covariance matrix $\Sigma_V$.}
    \label{fig:illconditioning_diagram}
\end{figure}
\begin{example}
Let us consider a simple 3-bus example in Fig.~\ref{fig:illconditioning_diagram}. Without loss of generality, we set $V_1$ to $1\,\mathrm{p.u.}$ The covariance matrix $\Sigma_V$ is obtained by averaging the outer products of the form below over a large number of samples:
\begin{equation*}
VV^\mathsf{H}=\mathbbl{1}_3\mathbbl{1}_3^\top+
    \begin{bmatrix}
        0 & \Delta v_{12}^* & \Delta v_{13}^* \\
         \Delta v_{12} & \Delta v_{12}^*+\Delta v_{12} & \Delta v_{13}^*+\Delta v_{12} \\
         \Delta v_{13} & \Delta v_{12}^*+\Delta v_{13} & \Delta v_{13}^*+\Delta v_{13}
    \end{bmatrix},
\end{equation*}
where $\Delta v_{1j}=z_{1j}\Delta i_{1j},\forall j\in\{2,3\}$.
The cross-product terms of voltage drops are neglected. The matrix is close to singularity in two cases: (i) if $\Delta v_{12}\approx\Delta v_{13}$ or (ii) if the voltage drops $\Delta v_{12},\Delta v_{13} \ll 1$ are close to the machine precision. The first condition occurs for similarly loaded lines, and the second in light loading conditions. 
\end{example}

Following \cite{WellCondLMMSE2022}, we say that a matrix is \textit{$L$-well-conditioned} if it can be computed without any inverse larger than $L\times L$. A well-conditioned Wiener filter solution can be established by truncating the smallest eigenvalues of the joint covariance matrix. Furthermore, it is well known that discarding the smallest eigenvalues and the corresponding eigenvectors leads to denoising. The eigendecomposition of the joint covariance matrix can be performed to obtain
\begin{equation}
    {\Sigma_{Z}} = \begin{bmatrix} {X_I} \\ {X_V} \end{bmatrix} {S_Z} \begin{bmatrix} {X_I} \\ {X_V} \end{bmatrix}^\mathsf{H},
\end{equation}
with ${X_Z}\coloneqq({X_I},{X_V})$ and eigenvalues ordered from largest to smallest. Now let us partition the eigenvector matrices into an $n\times L$ and an $n\times M$ matrix such that
\begin{equation*}
    {X_I} = \begin{bmatrix} {X}_{{I},L} & {X}_{{I},M} \end{bmatrix},\,\,
    {X_V} = \begin{bmatrix} {X}_{{V},L} & {X}_{{V},M} \end{bmatrix},
\end{equation*}
with $L+M = 2n$ and $L\leq n$. Similarly, $S_Z = \blkdiag(S_{Z,L},S_{Z,M})$ is partitioned into two square diagonal matrices of sizes $L\times L$ and $M\times M$. Furthermore, we define the Karhunen-Lo\`eve transform of ${Z}$ by ${K_Z} \coloneqq {X_Z}^\mathsf{H}{Z}$, from where ${Z} = {X_Z}{K_Z}$. In terms of subvectors we have ${{I}}= {X_I}{K_Z}$ and ${{V}}={X_V}{K_Z}$. Now let ${K}_{{Z},L}$ be the top $L$ submatrix of ${K_Z}$ so that ${{I}}\approx{X}_{{I},L}{K}_{{Z},L}$ and ${{V}}\approx {X}_{{V},L}{K}_{{Z},L}$. A least squares approximation of the transform is given by ${K}_{{Z},L}\approx ({X}_{{V},L}^\mathsf{H}{X}_{{V},L})^{-1}{X}_{{V},L}^\mathsf{H}{{V}}$. Using this estimate, we can obtain a simple approximate filter:
\begin{equation}\label{eq:WCWF}
    {Y_\mathrm{WCWF}} = {X}_{{I},L} ({X}_{{V},L}^\mathsf{H}{X}_{{V},L})^{-1}{X}_{{V},L}^\mathsf{H},
\end{equation}
which is well-conditioned, i.e., the matrix inverses are $L\times L$.
\begin{lemma}
Let $\rho_L\coloneqq \trace{(S_Z)-\trace{(S_{Z,L})}}$ define the truncation power loss. The derived $Y_\mathrm{WCWF}$ filter converges to the Wiener filter $Y_\mathrm{W}$ as $\rho_L\rightarrow0$.
\end{lemma}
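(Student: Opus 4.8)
The plan is to substitute the block form of the spectral decomposition into the Wiener filter \eqref{eq:WF} and pass to the limit $\rho_L\to0$. Since $\Sigma_Z$ is Hermitian positive semidefinite, every eigenvalue in $S_Z$ is nonnegative, so $\rho_L=\trace(S_Z)-\trace(S_{Z,L})=\trace(S_{Z,M})\ge 0$ is exactly the sum of the discarded eigenvalues; hence $\rho_L\to0$ forces $S_{Z,M}\to0$, i.e.\ each discarded eigenvalue vanishes. Reading off the blocks of $\Sigma_Z=X_ZS_ZX_Z^\mathsf{H}$ gives $\Sigma_{IV}=X_{I,L}S_{Z,L}X_{V,L}^\mathsf{H}+X_{I,M}S_{Z,M}X_{V,M}^\mathsf{H}$ and $\Sigma_{V}=X_{V,L}S_{Z,L}X_{V,L}^\mathsf{H}+X_{V,M}S_{Z,M}X_{V,M}^\mathsf{H}$, so that as $\rho_L\to0$ the discarded terms drop out and $\Sigma_{IV}\to X_{I,L}S_{Z,L}X_{V,L}^\mathsf{H}$, $\Sigma_{V}\to X_{V,L}S_{Z,L}X_{V,L}^\mathsf{H}$.

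The core algebraic step is to identify the limiting filter with $Y_\mathrm{WCWF}$. Because $X_{V,L}$ has full column rank $L$ (full-rank measurement assumption) and $S_{Z,L}\succ0$, the rank-$L$ matrix $X_{V,L}S_{Z,L}X_{V,L}^\mathsf{H}$ has Moore--Penrose pseudoinverse $X_{V,L}(X_{V,L}^\mathsf{H}X_{V,L})^{-1}S_{Z,L}^{-1}(X_{V,L}^\mathsf{H}X_{V,L})^{-1}X_{V,L}^\mathsf{H}$, and a direct cancellation of the $S_{Z,L}$ and $(X_{V,L}^\mathsf{H}X_{V,L})$ factors yields $X_{I,L}S_{Z,L}X_{V,L}^\mathsf{H}\big(X_{V,L}S_{Z,L}X_{V,L}^\mathsf{H}\big)^\dagger=X_{I,L}(X_{V,L}^\mathsf{H}X_{V,L})^{-1}X_{V,L}^\mathsf{H}=Y_\mathrm{WCWF}$, matching \eqref{eq:WCWF}. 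This is a routine computation once the pseudoinverse identity for a product of the form $ASA^\mathsf{H}$ is in hand.

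The main obstacle is justifying the passage to the limit inside $\Sigma_V^{-1}$: as $\rho_L\to0$ the rank of $\Sigma_V$ drops from $n$ to $L<n$, so $\Sigma_V^{-1}$ diverges and the map $M\mapsto M^{-1}$ is discontinuous there; one cannot simply replace $\Sigma_V^{-1}$ by $\Sigma_V^\dagger$ without argument. The clean way around this is to prove convergence of the filtered estimate rather than of the bare matrix. I would invoke the LMMSE orthogonality principle, valid for any linear $Y$,
\[
\mathbb{E}\|I-YV\|_2^2=\mathbb{E}\|I-Y_\mathrm{W}V\|_2^2+\big\|(Y-Y_\mathrm{W})\Sigma_V^{1/2}\big\|_\mathrm{F}^2 ,
\]
together with the exact error expression $I-Y_\mathrm{WCWF}V=\big(X_{I,M}-X_{I,L}(X_{V,L}^\mathsf{H}X_{V,L})^{-1}X_{V,L}^\mathsf{H}X_{V,M}\big)K_{Z,M}$, obtained from $I=X_{I,L}K_{Z,L}+X_{I,M}K_{Z,M}$ and $V=X_{V,L}K_{Z,L}+X_{V,M}K_{Z,M}$ using $(X_{V,L}^\mathsf{H}X_{V,L})^{-1}X_{V,L}^\mathsf{H}X_{V,L}=\mathcal{I}_L$. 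Since $\mathbb{E}[K_{Z,M}K_{Z,M}^\mathsf{H}]=S_{Z,M}$ and the coefficient matrix has bounded spectral norm (the blocks of the unitary $X_Z$ are contractions and $(X_{V,L}^\mathsf{H}X_{V,L})^{-1}$ stays bounded under a mild uniform full-rank condition on $X_{V,L}$), the excess error $\mathbb{E}\|I-Y_\mathrm{WCWF}V\|_2^2$ is bounded by a constant times $\trace(S_{Z,M})=\rho_L$. Substituting into the orthogonality identity gives $\|(Y_\mathrm{WCWF}-Y_\mathrm{W})\Sigma_V^{1/2}\|_\mathrm{F}^2\le C\rho_L\to0$, so $Y_\mathrm{W}V\to Y_\mathrm{WCWF}V$ in mean square, i.e.\ the two filters coincide on the range of $\Sigma_V$ in the limit. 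This energy-weighted statement is the operationally relevant notion of convergence; I would remark that full operator-norm convergence of $Y_\mathrm{W}$ to $Y_\mathrm{WCWF}$ on $\mathrm{range}(X_{V,L})^\perp$ additionally requires the discarded eigenvalues to vanish at a comparable rate, which is why the weighted form is the natural statement of the claim.
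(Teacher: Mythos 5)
The paper does not actually prove this lemma---it defers the proof to the cited reference on well-conditioned LMMSE estimation---so there is no in-paper argument to compare against; your proposal has to stand on its own, and it essentially does. The block bookkeeping is correct: $\Sigma_{IV}=X_{I,L}S_{Z,L}X_{V,L}^\mathsf{H}+X_{I,M}S_{Z,M}X_{V,M}^\mathsf{H}$ and $\Sigma_V=X_{V,L}S_{Z,L}X_{V,L}^\mathsf{H}+X_{V,M}S_{Z,M}X_{V,M}^\mathsf{H}$ follow from reading off the $(1,2)$ and $(2,2)$ blocks of $X_ZS_ZX_Z^\mathsf{H}$, the pseudoinverse formula for $X_{V,L}S_{Z,L}X_{V,L}^\mathsf{H}$ checks out against the four Moore--Penrose conditions, and the cancellation producing $X_{I,L}(X_{V,L}^\mathsf{H}X_{V,L})^{-1}X_{V,L}^\mathsf{H}$ is right. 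You also correctly put your finger on the one genuine subtlety: when $L<n$ the limiting $\Sigma_V$ is singular, so $\Sigma_{IV}\Sigma_V^{-1}$ does not converge as a matrix and the claim only survives in a $\Sigma_V$-weighted sense. Your orthogonality-principle bound $\|(Y_\mathrm{WCWF}-Y_\mathrm{W})\Sigma_V^{1/2}\|_\mathrm{F}^2\le \mathbb{E}\|I-Y_\mathrm{WCWF}V\|_2^2\le C\rho_L$ is a clean way to make that precise, and the exact error expression $(X_{I,M}-X_{I,L}(X_{V,L}^\mathsf{H}X_{V,L})^{-1}X_{V,L}^\mathsf{H}X_{V,M})K_{Z,M}$ with $\mathbb{E}[K_{Z,M}K_{Z,M}^\mathsf{H}]=S_{Z,M}$ is verified correctly. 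Two remarks. First, in the regime the paper actually uses ($L=n$, cf.\ Table~II, with $X_{V,L}$ square and invertible), the limiting $\Sigma_V$ stays nonsingular, matrix inversion is continuous there, and your first two paragraphs already give full convergence $Y_\mathrm{W}\to X_{I,L}X_{V,L}^{-1}=Y_\mathrm{WCWF}$ without the energy-weighted detour; it would strengthen the write-up to state this case separately. Second, your bound silently requires $\sigma_{\min}(X_{V,L})$ to stay bounded away from zero along the limiting family---you flag this as a ``mild uniform full-rank condition,'' which is honest, but note it is an added hypothesis not present in the lemma statement, and the lemma itself is vague about what family of covariances is being sent to the limit (for a fixed $\Sigma_Z$ and $L\le n$, $\rho_L$ is a fixed number, so the limit must be over a sequence of problems). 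These are interpretive gaps in the lemma rather than errors in your argument.
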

The proof of the lemma above is available in \cite{WellCondLMMSE2022}. The preceding lemma demonstrates that the obtained filter corresponds to the Wiener filter in the limit, despite being well-conditioned while the Wiener filter may not be.

\subsection{Embedding the Laplacian Structure via Postfiltering}
As discussed in Sec.~\ref{sec:2}, the admittance matrix is symmetric, and additionally, if the shunt admittances are neglected or nonexistent, $Y$ has zero row-sums. These properties are not guaranteed to hold for the Wiener filter estimate \eqref{eq:WF} or its well-conditioned counterpart \eqref{eq:WCWF}. In this section, we derive a simple postfiltering procedure that can be used to enforce the Laplacian matrix structure. 

Since $Y$ has a known structure, some entries are redundant in the sense that they can be deduced from this structure. Firstly, the admittance matrix is symmetric, thus requiring solely $n_d\coloneqq\frac{1}{2}n(n+1)$ elements to be stored in a vector $\vech{(Y)}$ such that $\vect{(Y)} = D \vech{(Y)}$, where $D\in\{0,1\}^{n^2\times n_d}$ is a full rank matrix called the duplication matrix and $\vect{(Y)}$ is the column vector stacking the columns of $Y$. Furthermore, the diagonal elements are redundant as they can be expressed as a negative sum of the off-diagonal elements in each row. That is, $\vech{(Y)} = R \vechrs{(Y)}$, where $R\in\{-1,0,1\}^{n_d\times n_r}$, with $n_r\coloneqq\frac{1}{2}n(n-1)$, is also full rank and $\vechrs{(Y)}$ collects the off-diagonal elements.

\begin{problem}
Let us assume that $\bar{Y}$ is an admittance matrix estimate obtained via \eqref{eq:opt_reformulated} or \eqref{eq:WCWF}. We consider a problem of determining $\hat{{Y}}$ that is symmetric, has zero row-sums, and is closest to $\bar{Y}$ in the Frobenius norm sense, i.e.,
\begin{align}\label{eq:pfilter}
    \hat{{Y}} =&\argmin_{{Y}\in\C^{n\times n}} \|{\bar{Y}} - {Y} \|_\mathrm{F}^2 \\
    & \quad \mathrm{s.t.} \quad {Y} = {Y}^\top, {Y}_{ii} = -\sum_{j\neq i} {Y}_{ij}, \forall i. \nonumber
\end{align}
\end{problem}

Interestingly, a closed-form solution based on pseudoinverses of $D$ and  $R$ can be obtained.  
\begin{proposition}\label{prop:postfiler}
The offdiagonal entries of $\hat{Y}$, the solution to \eqref{eq:pfilter}, are given by $\vechrs{(\hat{{Y}})} = R^\dagger {D}^\dagger \vect{(\bar{Y})}$.
\end{proposition}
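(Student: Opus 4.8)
The plan is to convert the matrix optimization in \eqref{eq:pfilter} into a plain vector least-squares problem by vectorizing and then eliminating the constraints through the two structural identities. Since the Frobenius norm is the Euclidean norm of the vectorization, $\|\bar{Y}-Y\|_\mathrm{F}^2=\|\vect(\bar{Y})-\vect(Y)\|_2^2$. For every feasible $Y$ (symmetric with zero row sums) the identities chain together: symmetry gives $\vect(Y)=D\vech(Y)$, and the diagonal redundancy gives $\vech(Y)=R\vechrs(Y)$, so that $\vect(Y)=DR\,\vechrs(Y)$. Conversely, any $\vechrs(Y)\in\C^{n_r}$ produces through $DR$ the vectorization of a symmetric, zero-row-sum matrix; hence the feasible set is exactly $\range(DR)$ and $\vechrs(Y)$ is a free, unconstrained variable. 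The problem therefore reduces to $\min_{x}\|\vect(\bar{Y})-DR\,x\|_2^2$ with $x=\vechrs(Y)$.

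Next I would solve this unconstrained least-squares problem. Because $D\in\{0,1\}^{n^2\times n_d}$ has full column rank ($n^2\geq n_d$) and $R\in\{-1,0,1\}^{n_d\times n_r}$ has full column rank ($n_d\geq n_r$), the product $DR$ has full column rank $n_r$: indeed $DRx=0$ forces $Rx\in\ker D=\{0\}$, hence $Rx=0$ and $x=0$. The normal equations $(DR)^\top(DR)\,\vechrs(\hat{Y})=(DR)^\top\vect(\bar{Y})$ then have the unique minimizer $\vechrs(\hat{Y})=(DR)^\dagger\vect(\bar{Y})$. This already yields a closed-form solution; the remaining task is to rewrite $(DR)^\dagger$ in the factored form $R^\dagger D^\dagger$ asserted in the statement.

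The main obstacle is precisely this reverse-order identity $(DR)^\dagger=R^\dagger D^\dagger$, which is \emph{not} valid for generic products and must be earned from the structure of $D$ and $R$. I would attempt it by verifying the four Moore--Penrose conditions for the candidate $X=R^\dagger D^\dagger$, exploiting $D^\dagger D=\mathcal{I}_{n_d}$ and $R^\dagger R=\mathcal{I}_{n_r}$ (full column rank), the diagonal Gram structure $D^\top D=\diag(\cdots)$ of the duplication matrix, and the block form of $R$ (an identity block on the off-diagonal $\vech$-rows and $-1$ entries on the diagonal $\vech$-rows). The delicate step, where I expect the argument to be hardest or to need an extra hypothesis, is the condition $DR\,(R^\dagger D^\dagger)\,DR=DR$. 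The usual sufficient condition for the reverse-order law (left factor full column rank, right factor full row rank) does \emph{not} apply here, since $R$ is full column rank rather than full row rank; equivalently, the Frobenius inner product restricted to symmetric matrices is, in $\vech$-coordinates, the $D^\top D$-weighted inner product (off-diagonal entries counted twice), whereas $R^\dagger$ implements the \emph{unweighted} projection onto the zero-row-sum subspace. Because of this weighting mismatch I would treat the reduction to $\vechrs(\hat{Y})=(DR)^\dagger\vect(\bar{Y})$ as the reliable core of the proof and scrutinize the passage to $R^\dagger D^\dagger$ most carefully, checking explicitly on small $n$ whether the weighting cancels; if it does not, the correct statement is the one with $(DR)^\dagger$, and the factored form holds only up to the $D^\top D$-reweighting of $R^\dagger$.
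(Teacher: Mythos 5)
Your reduction is exactly the paper's own argument: vectorize the Frobenius objective, parametrize the feasible set as $\vect(Y)=DR\,\vechrs(Y)$ with $\vechrs(Y)$ a free variable, and read off the unconstrained least-squares minimizer $\vechrs(\hat Y)=(DR)^\dagger\vect(\bar Y)$ using that $DR$ has full column rank. Up to that point you and the paper coincide, and that part is sound.

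Your suspicion about the final step is also correct, and it exposes an error in the paper's proof (and in the stated formula). The paper justifies $(DR)^\dagger=R^\dagger D^\dagger$ by ``$R$ and $D$ are full rank,'' but the reverse-order law needs the \emph{left} factor to have full column rank and the \emph{right} factor to have full row rank; here $R$ has full column rank ($n_d\ge n_r$), not full row rank, so the sufficient condition does not apply. The identity is in fact false, not merely unproven: for $n=2$ one has $D^{\mathsf H}D=\diag(1,2,1)$, $R=(-1,1,-1)^\top$, $DR=(-1,1,1,-1)^\top$, hence $(DR)^\dagger=\tfrac14(-1,1,1,-1)$ while $R^\dagger D^\dagger=\tfrac13\,(-1,\tfrac12,\tfrac12,-1)$; applied to $\bar Y=\bigl[\begin{smallmatrix}1&0\\0&0\end{smallmatrix}\bigr]$ the true minimizer of \eqref{eq:pfilter} is $\vechrs(\hat Y)=-1/4$, which $(DR)^\dagger$ returns and $R^\dagger D^\dagger$ (giving $-1/3$) does not. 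The culprit is exactly the weighting mismatch you identified: $(DR)^\dagger=(R^{\mathsf H}D^{\mathsf H}DR)^{-1}R^{\mathsf H}D^{\mathsf H}$, which is $D^\dagger$ composed with the $D^{\mathsf H}D$-weighted left inverse of $R$ (off-diagonal entries of $Y$ counted twice in the Frobenius norm), not with the unweighted $R^\dagger$. So the reliable statement is $\vechrs(\hat Y)=(DR)^\dagger\vect(\bar Y)$, and the factored form in the proposition should be corrected to use the $D^{\mathsf H}D$-weighted pseudoinverse of $R$.
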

\begin{proof}
By applying the $\vect{(\cdot)}$ operator on the objective function we obtain $\|\bar{Y} - Y \|_\mathrm{F}^2 = \|\vect{(\bar{Y})} - \vect{(Y)} \|_2^2$. The constraints can be included by considering that $\vect{(Y)} = DR\vechrs{(Y)}$. Upon substituting the previous equality in the objective function, the least squares solution is given by $\vechrs{(\hat{Y})} = (DR)^\dagger \vect{(\bar{Y})} = R^\dagger D^\dagger \vect{(\bar{Y})}$. The last equality holds as $R$ and $D$ are full rank.
\end{proof}

Therefore, applying $R^\dagger D^\dagger$ as a postfilter to an obtained $Y$ estimate enforces the Laplacian structure. We note that the pseudoinverses can be constructed efficiently as their structure is generic \cite{NeudeckerElimMatrix1980}, and depends only on the number of buses in the network. To further motivate and justify use of the postfilter, let us consider the following problem.

\begin{problem}
Using the available measurements $\mathbbl{\tilde{I}}$ and $\mathbbl{\tilde{V}}$ a constrained least squares network identification problem respecting the Laplacian structure of $Y$ is formulated as
\begin{align}\label{eq:ls_recast}
    \hat{{Y}} =&\argmin_{{Y}\in\C^{n\times n}} \|\mathbbl{\tilde{I}} - Y\mathbbl{\tilde{V}} \|_\mathrm{F}^2 \\
    & \quad \mathrm{s.t.} \,\, {Y} = {Y}^\top, {Y}_{ii} = - \sum_{j\neq i} {Y}_{ij}, \forall i. \nonumber
\end{align}
\end{problem}

Solution to the constrained least squares problem is given in the proposition below. The proof is omitted as it resembles the proof of Proposition~\ref{prop:postfiler}, with the additional identity $\vect{(Y\mathbbl{\tilde{V}})}=(\mathbbl{\tilde{V}}^\top \otimes \mathcal{I}_n) \vect{(Y)}$ required.

\begin{proposition}
The off-diagonal entries of the solution to \eqref{eq:ls_recast} are given by
\begin{align}
    \vechrs{(\hat{{Y}})} &= {R}^\dagger {D}^\dagger\,(\mathbbl{\tilde{V}}^\top \otimes \mathcal{I}_n)^\dagger \vect{(\mathbbl{\tilde{I}})}\\
    &= {R}^\dagger {D}^\dagger \vect{(\hat{{Y}}_\mathrm{LS})},
\end{align}
where $\hat{{Y}}_{LS}\coloneqq(\mathbbl{\tilde{V}}^\top \otimes \mathcal{I}_n)^\dagger \vect{(\mathbbl{\tilde{I}})}$ denotes the unconstrained least squares solution.
\end{proposition}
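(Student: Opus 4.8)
The plan is to mirror the proof of Proposition~\ref{prop:postfiler}, the only new element being that the residual in \eqref{eq:ls_recast} is measured through the forward map $Y\mapsto Y\mathbbl{\tilde{V}}$ rather than through $Y$ itself. First I would vectorize the Frobenius objective: invoking the stated identity $\vect(Y\mathbbl{\tilde{V}})=(\mathbbl{\tilde{V}}^\top\otimes\mathcal{I}_n)\vect(Y)$ turns the cost into the ordinary Euclidean least-squares residual $\|\vect(\mathbbl{\tilde{I}})-(\mathbbl{\tilde{V}}^\top\otimes\mathcal{I}_n)\vect(Y)\|_2^2$ in the variable $\vect(Y)$.

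Next I would encode the Laplacian constraints exactly as before. Symmetry gives $\vect(Y)=D\vech(Y)$ and the zero-row-sum property gives $\vech(Y)=R\vechrs(Y)$, so the feasible set is the range of $DR$ and the sole free variable is $\vechrs(Y)$. Substituting collapses \eqref{eq:ls_recast} to the unconstrained problem $\min\,\|\vect(\mathbbl{\tilde{I}})-(\mathbbl{\tilde{V}}^\top\otimes\mathcal{I}_n)DR\,\vechrs(Y)\|_2^2$, whose minimum-norm minimizer is $\vechrs(\hat{Y})=\big[(\mathbbl{\tilde{V}}^\top\otimes\mathcal{I}_n)DR\big]^\dagger\vect(\mathbbl{\tilde{I}})$.

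It then remains to split the pseudoinverse of the triple product. I would first use $(\mathbbl{\tilde{V}}^\top\otimes\mathcal{I}_n)^\dagger=(\mathbbl{\tilde{V}}^\top)^\dagger\otimes\mathcal{I}_n$, an identity that holds for every Kronecker product, to peel off the leftmost factor and recognize $(\mathbbl{\tilde{V}}^\top\otimes\mathcal{I}_n)^\dagger\vect(\mathbbl{\tilde{I}})=\vect(\hat{Y}_\mathrm{LS})$, which yields the second displayed equality; applying the $D,R$ factorization of Proposition~\ref{prop:postfiler} then yields the first.

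The step requiring the most care --- and the main obstacle --- is the factorization $\big[(\mathbbl{\tilde{V}}^\top\otimes\mathcal{I}_n)DR\big]^\dagger=R^\dagger D^\dagger(\mathbbl{\tilde{V}}^\top\otimes\mathcal{I}_n)^\dagger$. This is the reverse-order law for pseudoinverses, which, unlike for ordinary inverses, does not follow from full column rank of the factors alone: for two factors, $(AB)^\dagger=B^\dagger A^\dagger$ additionally requires a range condition, for instance $A$ having orthonormal columns or $B$ having full row rank, and more generally (when $A$ has full column rank) it reduces to $BB^\dagger$ commuting with $A^\mathsf{H}A$. The rank assumption $\rank(\mathbbl{\tilde{V}})=n$ secures full-column-rankness of all three factors and hence the invertibility of the Gram matrices, but the bookkeeping that justifies pushing the $\dagger$ through the product is exactly what Proposition~\ref{prop:postfiler} absorbs into ``$D$ and $R$ are full rank,'' and it is this verification I would carry out explicitly at each join.
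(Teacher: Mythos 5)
Your plan retraces the paper's own (omitted) argument: vectorize via $\vect(Y\mathbbl{\tilde{V}})=(\mathbbl{\tilde{V}}^\top\otimes\mathcal{I}_n)\vect(Y)$, parametrize the feasible set by $\vect(Y)=DR\vechrs(Y)$, and read off the least-squares solution. Up to the point where you write the unique minimizer as $\vechrs(\hat{Y})=\bigl[(\mathbbl{\tilde{V}}^\top\otimes\mathcal{I}_n)DR\bigr]^\dagger\vect(\mathbbl{\tilde{I}})$ the argument is sound (the composite has full column rank, so the minimizer is unique), and the Kronecker identity $(A\otimes B)^\dagger=A^\dagger\otimes B^\dagger$ you invoke is valid unconditionally. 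You have also put your finger on exactly the right pressure point: the reverse-order law for the pseudoinverse of the product. The problem is that you defer that verification, and it does not go through --- the step fails, so the proposal as written has a genuine gap that cannot be closed along this route.

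Concretely, for $A$ of full column rank one has $(AB)^\dagger=B^\dagger A^\dagger$ if and only if $\range(A^{\mathsf{H}}AB)\subseteq\range(B)$, and this range condition is violated already at the inner join $(DR)^\dagger\overset{?}{=}R^\dagger D^\dagger$, because $D^\top D$ is the non-scalar diagonal matrix weighting off-diagonal positions by $2$ and diagonal positions by $1$, which does not preserve $\range(R)$. For $n=2$ one computes $DR=(-1,1,1,-1)^\top$, hence $(DR)^\dagger=\tfrac{1}{4}(-1,1,1,-1)$, whereas $R^\dagger D^\dagger=\tfrac{1}{3}\bigl(-1,\tfrac{1}{2},\tfrac{1}{2},-1\bigr)$; applied to $\vect(\bar{Y})=(0,1,1,0)^\top$ these return $\tfrac{1}{2}$ and $\tfrac{1}{3}$ respectively, and only the former minimizes the residual $2x^2+2(1-x)^2$. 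The outer join is worse: peeling $(\mathbbl{\tilde{V}}^\top\otimes\mathcal{I}_n)^\dagger$ off the front would require $\range\bigl((\mathbbl{\tilde{V}}^*\mathbbl{\tilde{V}}^\top\otimes\mathcal{I}_n)DR\bigr)\subseteq\range(DR)$, i.e.\ that $Y\mathbbl{\tilde{V}}\mathbbl{\tilde{V}}^{\mathsf{H}}$ remain symmetric with zero row sums for every feasible $Y$, which fails for generic data. So ``full column rank of the factors'' is not the missing bookkeeping but an insufficient hypothesis; the defensible statement is $\vechrs(\hat{Y})=\bigl[(\mathbbl{\tilde{V}}^\top\otimes\mathcal{I}_n)DR\bigr]^\dagger\vect(\mathbbl{\tilde{I}})$, and the factored form $R^\dagger D^\dagger(\mathbbl{\tilde{V}}^\top\otimes\mathcal{I}_n)^\dagger\vect(\mathbbl{\tilde{I}})$ (equivalently, postfiltering the unconstrained solution as in Proposition~\ref{prop:postfiler}) is in general only an approximation to, not the exact solution of, \eqref{eq:ls_recast}.
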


The result above shows that the optimal solution to the constrained least squares problem \eqref{eq:ls_recast} can be obtained by applying postfiltering \eqref{eq:pfilter} to the unconstrained least squares solution. The equivalence between the unconstrained least squares solution and the Wiener filter solution \eqref{eq:WF} when the same measurements $\mathbbl{\tilde{I}}$ and $\mathbbl{\tilde{V}}$ are used to compute the sample covariances motivates the use of the postfiltering for ensuring the Laplacian structure in our proposed solution \eqref{eq:WCWF}. 
Note that the least squares problem in \eqref{eq:ls_recast} was previously considered in \cite{IPF2022}. However, the connection between the unconstrained and constrained least squares problems via the postfilter in \eqref{eq:pfilter} was not recognized. 

\begin{figure}[!t]
    \centering
    \includegraphics[scale=0.5]{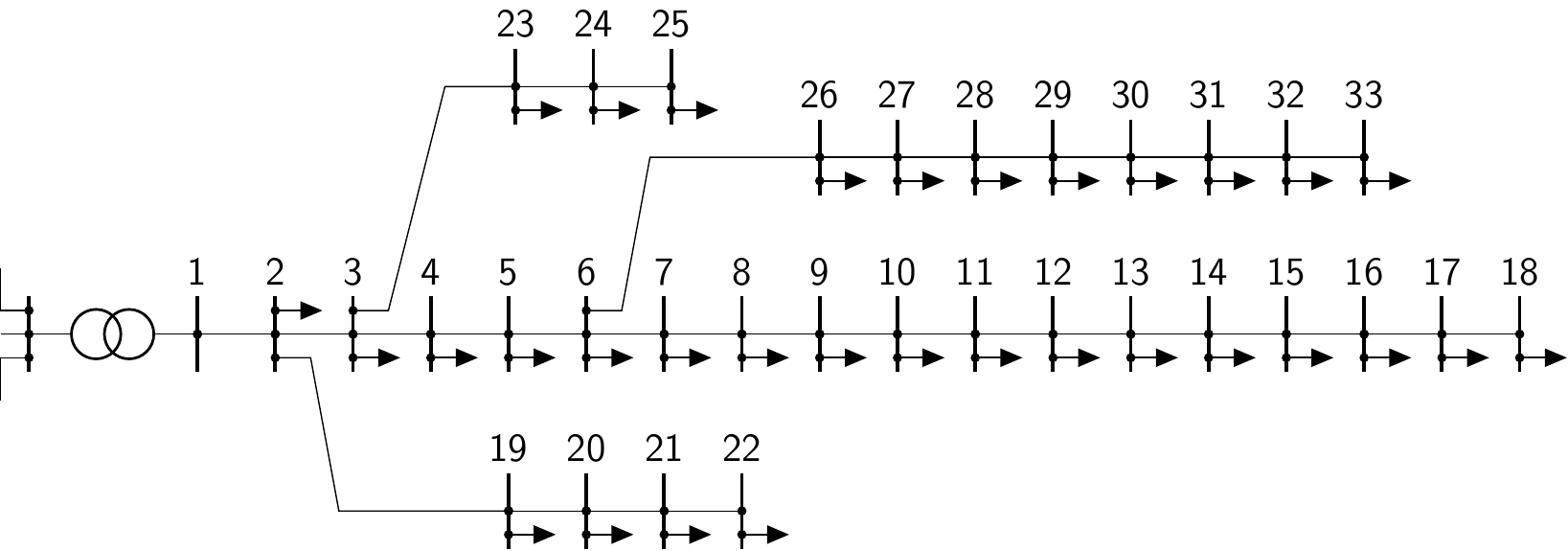}
    \caption{Single line diagram of the IEEE 33-bus test system.}
    \label{fig:33busIEEE}
    \vspace{-0.35cm}
\end{figure}
\begin{figure}[!b]
    \centering
    \includegraphics[scale=1.125]{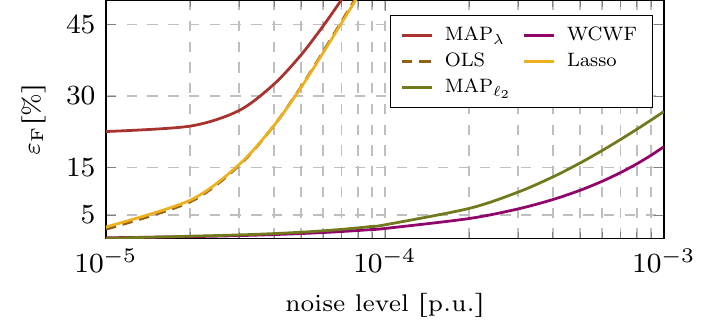}
    \caption{Comparison of the relative error of existing and proposed methods for various noise levels.}
    \label{fig:comp}
\end{figure}

\section{Results}
The proposed identification methods are evaluated on the IEEE 33-bus network presented in Fig.~\ref{fig:33busIEEE}. We assume that a $\mu$PMU device is placed on each node in the network, measuring both voltage and current phasors. The procedure to generate the estimation data follows \cite{JS_TSG2022}, where synthetic load profiles are created using the GENETX generator, and the power flow procedure is run using the PandaPower library. To realistically represent $\mu$PMU measurements, the voltage and current phasors are corrupted with $0.01\%$ standard deviation Gaussian noise. Measurements collected at 50Hz frequency are averaged over a minute and 7 days of thus constructed data ($10080$ tuples of voltage and current phasors) are used in the considered estimation procedures.  

\subsection{Estimation Performance under Varying Noise Levels}
In this section, we compare the estimation accuracy of the proposed methods: (i) identification procedure in \eqref{eq:opt_reformulated}, labeled by $\mathrm{MAP}_\lambda$ henceforth, and (ii) the well-conditioned Wiener filter \eqref{eq:WCWF}, denoted as $\mathrm{WCWF}$ hereafter, to the state-of-the-art approaches from the literature under varying noise levels. More precisely, we use Ordinary Least Squares (OLS) \cite{IPF2022}, Lasso \cite{Ardakanian2019}, and MAP with $\ell_2$ regularization \cite{JS_TSG2022} for benchmarking. The metric used to evaluate the accuracy of an estimation procedure is the relative Frobenius norm $\varepsilon_\mathrm{F} = \|\hat{Y}-Y\|_\mathrm{F}/\|Y\|_\mathrm{F}$, where $Y$ is the \textit{true} admittance matrix and $\hat{Y}$ denotes the estimate. The results in Fig.~\ref{fig:comp} show significant estimation bias when performing \eqref{eq:opt_reformulated} due to the violation of stationarity and constant $x/r$ ratio assumptions. Furthermore, the non-errors in variables models, OLS and Lasso, demonstrate high sensitivity to the measurement noise. Finally, the $\ell_2$ regularized MAP from \cite{JS_TSG2022} and the proposed well-conditioned Wiener filter \eqref{eq:WCWF} demonstrate satisfactory performance over a large range of noise levels. 

\begin{table}[!t]
\renewcommand{\arraystretch}{1.5}
\caption{Accuracy and computation time of the considered estimation methods on three distribution grid test cases.}
\noindent
\centering
    \begin{minipage}{\linewidth} 
    \begin{center}
        \begin{tabular}{ c || c  c | c  c | c  c | } \label{tab:cmptAcc}
            \multirow{2}{*}{\textbf{Network}} & \multicolumn{2}{c}{Lasso} \vline height 2ex & \multicolumn{2}{c}{$\mathrm{MAP}_{\ell_2}$} \vline height 2ex &  \multicolumn{2}{c}{WCWF} \vline \\ 
            \cline{2-7}
            & $\varepsilon_\mathrm{F} [\%]$ & $\tau [s]$ & $\varepsilon_\mathrm{F} [\%]$ & $\tau [s]$ & $\varepsilon_\mathrm{F} [\%]$ & $\tau [s]$ \\
            \cline{1-7}
            \textbf{CIGRE10} & $51.8$ & $0.5$ & $10.8$ & $33$ & $1.9$ & $0.002$ \\
            \textbf{IEEE33}   & $59.5$ & $14.3$ & $2.9$ & $780.5$ & $2.1$ & $0.06$ \\
            \textbf{IEEE123}  & $60.1$ & $125.3$ & $6.36$ & $3527$ & $4.2$ & $0.71$ \\
        \end{tabular}
        \end{center}
    \end{minipage}
    \vspace{-0.35cm}
\end{table}
\begin{table}[!b]
\renewcommand{\arraystretch}{1.5}
\caption{The condition number of matrices subjected to inversion in the least squares and Wiener filter methods.}
\noindent
\centering
    \begin{minipage}{\linewidth} 
    \begin{center}
        \begin{tabular}{ c || c |  c | c } \label{tab:condNR}
            \textbf{Network} & $\kappa(\Sigma_V)$  & $\kappa({\Sigma_I})$ &  $\kappa({X}_{{V},L}^\mathsf{H}{X}_{{V},L})$ \\ 
            \cline{1-4}
            \textbf{CIGRE10}    & $2\times 10^{12}$ & $3\times 10^{10}$ & $1\times 10^{7}$  \\
            \textbf{IEEE33}   & $7\times 10^{12}$ & $4\times 10^{11}$ & $1\times 10^{7}$  \\
            \textbf{IEEE123}  & $1\times 10^{13}$ & $3\times 10^{12}$ & $7\times 10^{6}$\\
        \end{tabular}
        \end{center}
    \end{minipage}
\end{table}
\subsection{Computational Efficiency and Conditioning Analysis}
To analyze the computational efficiency, we perform parameter estimation on three benchmark distribution grids: the 10-bus CIGRE MV feeder, the previously considered IEEE 33-bus system, and the three-phase part of the IEEE 123-bus system consisting of 56 buses. Table~\ref{tab:cmptAcc} summarizes the estimation results in the form of accuracy $\varepsilon_\mathrm{F}$ and computation time $\tau$. Only Lasso, MAP with $\ell_2$ regularization, and the well-conditioned Wiener filter are considered for brevity. The nominal Gaussian noise of $0.01\%$ is used. The table shows that WCWF outperforms the other methods in terms of accuracy and computation time across all tested scenarios.

Table~\ref{tab:condNR} presents the condition numbers of three matrices, namely $\Sigma_V$, $\Sigma_I$, and ${X}_{{V},L}^\mathsf{H}{X}_{{V},L}$, which are subjected to inversion in different identification methods. Specifically, $\Sigma_V$ is inverted in both the OLS approach and the Wiener filter method in \eqref{eq:WF}, $\Sigma_I$ is commonly inverted in the impedance matrix estimation process \cite{MoffatUnsupervised2020}, and ${X}_{{V},L}^\mathsf{H}{X}_{{V},L}$ with $L=n$ is inverted in the proposed well-conditioned Wiener filter approach \eqref{eq:WCWF}. The results indicate that the proposed approach offers significant improvement in conditioning compared to the other methods across all three test cases.

\subsection{MAP Estimation under Stationary Current Injections}
The MAP estimation of $\lambda$ has thus far demonstrated unsatisfactory performance -- see Fig.~\ref{fig:comp}. However, in the previous simulation setting, neither the stationarity nor the constant $x/r$ ratio assumptions were valid. We now enforce the two assumptions and analyze how the performance degrades when deviations from these assumptions are imposed. To this end, a modified version of the IEEE 33-bus system is created by setting $b_{ij}=-r_{ij},\forall (i,j)\in\mathcal{E}$, thus achieving a constant conductance-susceptance ratio of one throughout the network. We introduce operator $\mathrm{ndiag}(\cdot): \C^{n\times n}\rightarrow\C^{n\times n}$, which converts the diagonal entries of a matrix to zeros and keeps the off-diagonal elements. To measure the deviation of a matrix $A\in\C^{n\times n}$ from being unitarily diagonalizable by $\mathbbl{W}$ we define the relative distance $\mathrm{dist}_\mathbbl{W}(A)=\|\mathrm{ndiag}(\mathbbl{W}^\mathsf{H}A\mathbbl{W})\|_\mathrm{F}/\|\mathbbl{W}^\mathsf{H}A\mathbbl{W}\|_\mathrm{F}$.

The results in Fig.~\ref{fig:cov_distance} show the increase in estimation error with the increase in $\mathrm{dist}_\mathbbl{W}(\Sigma_I)$ which quantifies the violation of the stationarity property. The estimator is demonstrated to be unbiased when applied to a constant $x/r$ ratio network. However, a significant bias of approximately $10\%$ error is present when estimating the original IEEE 33-bus network.  A constant estimation error is shown for the original network corresponding to $\Sigma_I=\mathcal{I}_n$ since $\mathbbl{W}$ is not well-defined for non-normal admittance matrices. 

\begin{figure}[!t]
    \centering
    \includegraphics[scale=1.125]{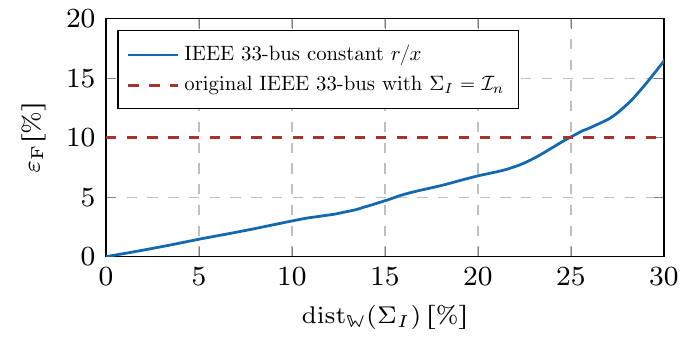}
    \caption{Dependency of the estimation error to the deviation of the current injection covariance matrix to a matrix diagonalizable by $\mathbbl{W}$.}
    \label{fig:cov_distance}
    \vspace{-0.35cm}
\end{figure}

\section{Conclusion}
This paper shows how a maximum a posteriori admittance matrix estimation can be simplified when the current injections are stationary. Nevertheless, the approach only performed well for estimating networks with a constant conductance-susceptance ratio. We have subsequently derived a more general and practical admittance matrix estimation approach based on linear minimum mean square error estimation. Our results demonstrate that the proposed approach is more accurate and computationally efficient than the state-of-the-art when applied to standard test networks.
\bibliographystyle{IEEEtran}
\bibliography{bib.bib}

\begin{thebibliography}{10}
\providecommand{\url}[1]{#1}
\csname url@samestyle\endcsname
\providecommand{\newblock}{\relax}
\providecommand{\bibinfo}[2]{#2}
\providecommand{\BIBentrySTDinterwordspacing}{\spaceskip=0pt\relax}
\providecommand{\BIBentryALTinterwordstretchfactor}{4}
\providecommand{\BIBentryALTinterwordspacing}{\spaceskip=\fontdimen2\font plus
\BIBentryALTinterwordstretchfactor\fontdimen3\font minus
  \fontdimen4\font\relax}
\providecommand{\BIBforeignlanguage}[2]{{%
\expandafter\ifx\csname l@#1\endcsname\relax
\typeout{** WARNING: IEEEtran.bst: No hyphenation pattern has been}%
\typeout{** loaded for the language `#1'. Using the pattern for}%
\typeout{** the default language instead.}%
\else
\language=\csname l@#1\endcsname
\fi
#2}}
\providecommand{\BIBdecl}{\relax}
\BIBdecl

\bibitem{Admittance2019}
A.~M. Kettner and M.~Paolone, ``On the properties of the compound nodal
  admittance matrix of polyphase power systems,'' \emph{IEEE Transactions on
  Power Systems}, vol.~34, no.~1, pp. 444--453, 2019.

\bibitem{AGTENetworks}
F.~D\"orfler, J.~W. Simpson-Porco, and F.~Bullo, ``Electrical networks and
  algebraic graph theory: Models, properties, and applications,''
  \emph{Proceedings of the IEEE}, vol. 106, no.~5, pp. 977--1005, 2018.

\bibitem{Tutorial2022}
D.~Deka \emph{et~al.}, ``Learning distribution grid topologies: A tutorial,''
  \emph{IEEE Transactions on Smart Grid}, pp. 1--1, 2023.

\bibitem{vonMeier2017}
A.~von Meier, E.~Stewart, A.~McEachern, M.~Andersen, and L.~Mehrmanesh,
  ``Precision micro-synchrophasors for distribution systems: A summary of
  applications,'' \emph{IEEE Transactions on Smart Grid}, vol.~8, no.~6, pp.
  2926--2936, 2017.

\bibitem{Schirmer2023}
P.~Schirmer and I.~Mporas, ``Non-intrusive load monitoring: A review,''
  \emph{IEEE Transactions on Smart Grid}, vol.~14, no.~1, pp. 769--784, 2023.

\bibitem{Bolognani2013VStats}
S.~Bolognani, N.~Bof, D.~Michelotti, R.~Muraro, and L.~Schenato,
  ``Identification of power distribution network topology via voltage
  correlation analysis,'' in \emph{52nd IEEE Conference on Decision and
  Control}, 2013, pp. 1659--1664.

\bibitem{Deka2018StructureNetworks}
D.~Deka, S.~Backhaus, and M.~Chertkov, ``{Structure learning in power
  distribution networks},'' \emph{IEEE Transactions on Control of Network
  Systems}, vol.~5, no.~3, pp. 1061--1074, 2018.

\bibitem{Cavraro2019Probing}
G.~Cavraro and V.~Kekatos, ``Inverter probing for power distribution network
  topology processing,'' \emph{IEEE Transactions on Control of Network
  Systems}, vol.~6, no.~3, pp. 980--992, 2019.

\bibitem{Wehenkel2020}
A.~Wehenkel, A.~Mukhopadhyay, J.-Y.~L. Boudec, and M.~Paolone, ``Parameter
  estimation of three-phase untransposed short transmission lines from
  synchrophasor measurements,'' \emph{IEEE Transactions on Instrumentation and
  Measurement}, vol.~69, no.~9, pp. 6143--6154, 2020.

\bibitem{MoffatUnsupervised2020}
K.~Moffat \emph{et~al.}, ``Unsupervised impedance and topology estimation of
  distribution networks—limitations and tools,'' \emph{IEEE Transactions on
  Smart Grid}, vol.~11, no.~1, pp. 846--856, 2020.

\bibitem{IPF2022}
Y.~Yuan, S.~H. Low, O.~Ardakanian, and C.~J. Tomlin, ``Inverse power flow
  problem,'' \emph{IEEE Transactions on Control of Network Systems}, vol.~10,
  no.~1, pp. 261--273, 2023.

\bibitem{Fabbiani2022IdentificationExperiment}
E.~Fabbiani, P.~Nahata, G.~De~Nicolao, and G.~Ferrari-Trecate, ``Identification
  of {AC} distribution networks with recursive least squares and optimal design
  of experiment,'' \emph{IEEE Transactions on Control Systems Technology},
  vol.~30, no.~4, pp. 1750--1757, 2022.

\bibitem{Ardakanian2019}
O.~Ardakanian, V.~W.~S. Wong, R.~Dobbe, S.~H. Low, A.~von Meier, C.~J. Tomlin,
  and Y.~Yuan, ``On identification of distribution grids,'' \emph{IEEE
  Transactions on Control of Network Systems}, vol.~6, no.~3, pp. 950--960,
  2019.

\bibitem{Yu2018PaToPa}
J.~Yu, Y.~Weng, and R.~Rajagopal, ``{PaToPa}: A data-driven parameter and
  topology joint estimation framework in distribution grids,'' \emph{IEEE
  Transactions on Power Systems}, vol.~33, no.~4, pp. 4335--4347, 2018.

\bibitem{Gupta2021_CompoundMeasurements}
R.~K. Gupta, F.~Sossan, J.~Y. Le~Boudec, and M.~Paolone, ``Compound admittance
  matrix estimation of three-phase untransposed power distribution grids using
  synchrophasor measurements,'' \emph{IEEE Transactions on Instrumentation and
  Measurement}, vol.~70, pp. 1--13, 2021.

\bibitem{Brouillon2021BayesianNetworks}
J.-S. Brouillon \emph{et~al.}, ``Bayesian methods for the identification of
  distribution networks,'' in \emph{Proceedings of the IEEE Conference on
  Decision and Control}, 2021, pp. 3646--3651.

\bibitem{JS_TSG2022}
------, ``Bayesian error-in-variables models for the identification of
  distribution grids,'' \emph{IEEE Transactions on Smart Grid}, vol.~14, no.~2,
  pp. 1289--1299, 2023.

\bibitem{DSPonGraphs2013}
A.~Sandryhaila and J.~M.~F. Moura, ``Discrete signal processing on graphs,''
  \emph{IEEE Transactions on Signal Processing}, vol.~61, no.~7, pp.
  1644--1656, 2013.

\bibitem{StationarySPG2017}
N.~Perraudin and P.~Vandergheynst, ``Stationary signal processing on graphs,''
  \emph{IEEE Transactions on Signal Processing}, vol.~65, no.~13, pp.
  3462--3477, 2017.

\bibitem{StationaryGP2017}
A.~G. Marques, S.~Segarra, G.~Leus, and A.~Ribeiro, ``Stationary graph
  processes and spectral estimation,'' \emph{IEEE Transactions on Signal
  Processing}, vol.~65, no.~22, pp. 5911--5926, 2017.

\bibitem{SLowBook}
S.~H. Low, \emph{Power System Analysis: A Mathematical Approach}.\hskip 1em
  plus 0.5em minus 0.4em\relax Lecture Notes, Caltech, 2023.

\bibitem{Turizo2022}
D.~Turizo and D.~K. Molzahn, ``Invertibility conditions for the admittance
  matrices of balanced power systems,'' \emph{IEEE Transactions on Power
  Systems}, vol.~38, no.~4, pp. 3841--3853, 2023.

\bibitem{Kron2013}
F.~D\"orfler and F.~Bullo, ``Kron reduction of graphs with applications to
  electrical networks,'' \emph{IEEE Transactions on Circuits and Systems I:
  Regular Papers}, vol.~60, no.~1, pp. 150--163, 2013.

\bibitem{DU202013311}
X.~Du, A.~Engelmann, Y.~Jiang, T.~Faulwasser, and B.~Houska, ``Optimal
  experiment design for {AC} power systems admittance estimation,''
  \emph{IFAC-PapersOnLine}, vol.~53, no.~2, pp. 13\,311--13\,316, 2020.

\bibitem{Varghese2022TransmissionNoise}
A.~C. Varghese, A.~Pal, and G.~Dasarathy, ``Transmission line parameter
  estimation under non-gaussian measurement noise,'' \emph{IEEE Transactions on
  Power Systems}, vol.~38, no.~4, pp. 3147--3162, 2023.

\bibitem{Bariya2021GuaranteedGrids}
M.~Bariya, D.~Deka, and A.~Von~Meier, ``Guaranteed phase topology
  identification in three phase distribution grids,'' \emph{IEEE Transactions
  on Smart Grid}, vol.~12, no.~4, pp. 3605--3612, 2021.

\bibitem{BlockCD2011}
H.~Zhu, G.~Leus, and G.~B. Giannakis, ``Sparsity-cognizant total least-squares
  for perturbed compressive sampling,'' \emph{IEEE Transactions on Signal
  Processing}, vol.~59, no.~5, pp. 2002--2016, 2011.

\bibitem{WellCondLMMSE2022}
E.~K.~P. Chong, ``Well-conditioned linear minimum mean square error
  estimation,'' \emph{IEEE Control Systems Letters}, vol.~6, pp. 2431--2436,
  2022.

\bibitem{NeudeckerElimMatrix1980}
J.~R. Magnus and H.~Neudecker, ``The elimination matrix: Some lemmas and
  applications,'' \emph{SIAM Journal on Algebraic Discrete Methods}, vol.~1,
  no.~4, pp. 422--449, 1980.

\end{thebibliography}

\end{document}